\newcommand{\blind}{0}
\def\code#1{\texttt{#1}}
\newcommand{\proglang}[1]{\textsf{#1}}
\newcommand{\distas}[1]{\mathbin{\overset{#1}{\kern\z@\sim}}}%
\newsavebox{\mybox}\newsavebox{\mysim}
\newcommand{\distras}[1]{%
  \savebox{\mybox}{\hbox{\kern3pt$\scriptstyle#1$\kern3pt}}%
  \savebox{\mysim}{\hbox{$\sim$}}%
  \mathbin{\overset{#1}{\kern\z@\resizebox{\wd\mybox}{\ht\mysim}{$\sim$}}}%
}
\useunder{\uline}{\ul}{}
 \newtheorem{theorem}{Theorem}
\begin{document}

\def\spacingset#1{\renewcommand{\baselinestretch}%
{#1}\small\normalsize} \spacingset{1}


\if0\blind
{
  \title{\bf A Spatial Bayesian Semiparametric Mixture Model for Positive Definite Matrices with Applications to Diffusion Tensor Imaging}

  \author{
  Zhou Lan\\
    Department of Statistics, North Carolina State University\\
    and \\
    Brian J Reich \\
    Department of Statistics, North Carolina State University\\
    and\\
     Dipankar Bandyopadhyay \\
      Department of Biostatistics, Virginia Commonwealth University
    }

  \maketitle
} \fi

\if1\blind
{
  \bigskip
  \bigskip
  \bigskip
  \begin{center}
    {\LARGE\bf Title}
\end{center}
  \medskip
} \fi

\bigskip
\begin{abstract}
Diffusion tensor imaging (DTI) is a popular magnetic resonance imaging technique used to characterize microstructural changes in the brain. DTI studies quantify the diffusion of water molecules in a voxel using an estimated $3\times3$ symmetric positive definite diffusion tensor matrix. Statistical analysis of DTI data is challenging because the data are positive definite matrices. Matrix-variate information is often summarized by a univariate quantity, such as the fractional anisotropy (FA), leading to a loss of information. Furthermore, DTI analyses often ignore the spatial association of neighboring voxels, which can lead to imprecise estimates. Although the spatial modeling literature is abundant, modeling spatially dependent positive definite matrices is challenging. To mitigate these issues, we propose a matrix-variate Bayesian semiparametric mixture model, where the positive definite matrices are distributed as a mixture of inverse Wishart distributions with the spatial dependence captured by a Markov model for the mixture component labels. Conjugacy and the double Metropolis-Hastings algorithm result in fast and elegant Bayesian computing. Our simulation study shows that the proposed method is more powerful than non-spatial methods. We also apply the proposed method to investigate the effect of cocaine use on brain structure. The contribution of our work is to provide a novel statistical inference tool for DTI analysis by extending spatial statistics to matrix-variate data.

\end{abstract}

\noindent%
{\it Keywords:}  Bayesian semiparametric, Diffusion tensor imaging, Inverse Wishart distribution, Matrix-variate, Positive definite matrix, Spatial statistics

\spacingset{1.5}
\newpage
\section{Introduction} 
\label{sec:intro}
Measurement of signal attenuation from water diffusion, often considered one of the most important magnetic resonance contrast mechanisms \citep{alexander2007diffusion}, is usually achieved via diffusion tensor imaging (DTI) that maps and characterizes the 3-D diffusion of water molecules as a function of the spatial location \citep{basser1994mr}. The diffusion process in the brain reflects interactions with many obstacles, such as fibers, thereby revealing microscopic details about the underlying tissue architecture. Unlike ordinary images where scalars are summarized for each voxel, a distinguishing feature of DTI is each voxel is associated with a $3\times 3$ symmetric positive definite matrices which can be interpreted as the covariance matrix of a 3-D Gaussian distribution modeling the local Brownian motion of the water molecules \citep{schwartzman2008inference}. These positive definite matrices are also called the diffusion tensors (DTs). One important clinical application of the DTI is to detect regions of local differences in the brain between two groups (i.e., normal versus disease), revealing anatomical structural differences \citep{lo2010diffusion}. For example, the motivating data for this paper comes from a clinical DTI study \citep{ma2017preliminary}, where the scientific objective is to detect regions of differences between cocaine users and non-cocaine users.

Statistical analysis of DTI data is challenging due to the difficulty of modeling matrix-variate responses. One option is to project the DTs into fractional anisotropy (FA), a scalar describing the degrees of anisotropy of a diffusion process. However, some information is lost because different positive definite matrices may produce the same FA \citep{ennis2006orthogonal}. Matrix-variate methods potentially avoid information loss. There are relatively few matrix-variate methods available to analyze DTI data, and they can be broadly classified into the (inverse) Wishart matrix methods \citep{dryden2009non} and the random ellipsoid models \citep{schwartzman2008inference,lee2017inference}. However, these voxel-level models ignore information from neighboring voxels that may have similar neuronal activity (see Figure \ref{fig:typical_dti}), despite recommendations of incorporating this non-negligible spatial association to achieve efficient and valid inference \citep{spence2007accounting} as well as studies revealing that the disease status at proximally-located/neighboring voxels can be similar \citep[see][]{wu2013mapping,xue2018bayesian}. This motivates us to develop an improved spatial statistical model which (a) utilizes full matrix information, (b) captures spatial dependence, and (c) can be implemented via fast and elegant computing. 

\begin{figure}[ht!]
    \centering
    \includegraphics[width=0.32\textwidth]{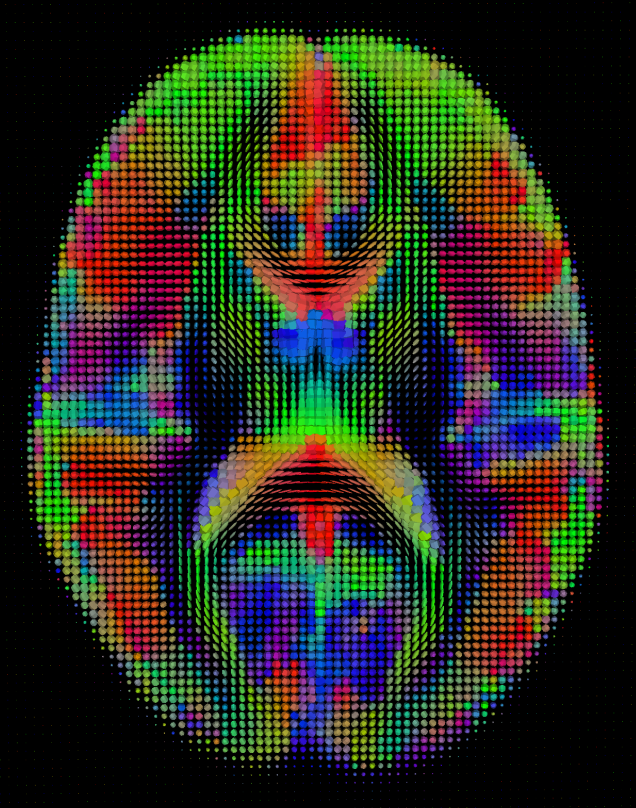}
    \caption{Diffusion tensor imaging of a human brain. Positive definite matrices which are visualized as ellipsoids are produced for each voxel, revealing the anatomical structure of the brain. The ellipsoids are generated using the visualization software “BrainSuite” (\url{http://brainsuite.org/}).}
    \label{fig:typical_dti}
\end{figure}

The spatial neuroimaging toolbox for univariate responses is considerably rich: \citet{woolrich2004fully} proposed a fully Bayesian model for spatiotemporal imaging data; \citet{kang2011meta} implemented spatial point processes for meta-analysis of imaging data; To select essential biological features, \citet{musgrove2016fast} introduced spatial Bayesian variable selection for neuroimaging data; Recently, \citet{reich2018fully} proposed spectral methods for ADNI data to provide computational benefits. All of these studies demonstrated an improvement in the precision of estimates by properly accounting for spatial dependence. 

In this vein, the Potts model, a generalization of the Ising model in statistical mechanics, has also been successfully applied to imaging \citep{johnson2013bayesian,li2018bayesian}. A desirable property of the Potts model is that it avoids smoothing over abrupt changes in the image intensity \citep{johnson2013bayesian}, and this makes it more attractive than available Gaussian kernel methods. To this end, we assume the positive definite DTs follow a mixture of inverse Wishart distributions, with the mixture component labels modeled via a (spatial) Potts model, representing a discrete Markov random field. This \textit{semiparametric mixture} specification refers to a class of flexible mixture distributions with a finite number of components \citep{lindsay1995review}. 

Besides spatial modeling, another important topic in neuroimaging is detecting regions of differences between two groups. Previous attempts at identifying regions of differences between two groups were formulated through voxel-wise hypothesis testing \citep{schwartzman2008inference,lee2017inference}. An alternative option is to construct multilevel hierarchical modeling accounting both subjective-level and group-level variation and use the group-level parameters for voxel-wise hypothesis testing \citep{woolrich2004multilevel, liu2014functional}. In this paper, we use the latter approach via extending the latent classic Potts model into a hierarchical \textit{two-way} framework, allowing hypothesis testing via group-level parameters and inter-subject variability simultaneously.


Our proposal is implemented using the Bayesian approach, accounting for the uncertainty of model parameters in all levels of the hierarchy. However, the Bayesian approach is often problematic in neuroimaging because of its heavy computational burden  \citep{cohen2017computational}. Although the associated Markov chain Monte Carlo (MCMC) algorithm is mostly composed of computationally tractable Gibbs steps that can be paralleled, a major drawback of the Potts model is the intractable normalizing constant, creating a bottleneck for hyperparameter updates. In this paper, it is resolved via the double Metropolis-Hastings algorithm \citep{liang2010double} recommended by \citet{park2018bayesian}.

To the best of our knowledge, this is the first work on exploring spatial associations in modeling positive definite matrix-variate data under a Bayesian semiparametric framework, with applications to DTI. In the rest of the paper, we first introduce the single-subject and multi-subject model, and the group hypothesis testing framework in Section \ref{sec:model}. Relevant MCMC computational details appear in Section \ref{sec:computing}. To demonstrate the improvement in performance compared to plausible alternatives, we perform simulation studies in Section \ref{sec:sim}. In Section \ref{sec:application}, we present the application to the motivating cocaine data set. Finally, Section \ref{sec:dis} concludes with a discussion.

\section{Model}
\label{sec:model}
In this section, we introduce the spatial Bayesian semiparametric mixture model for positive definite matrices. We introduce the single-subject model first in Section \ref{sec:single} and then extend to the multi-subject model in Section \ref{sec:multi}.

\subsection{Single-Subject Model}
\label{sec:single}
Let $\bm{A}_v$ be the $p\times p$ DT at voxel $v\in \{1, 2, .., n\}$. To ensure $\bm{A}_{v}$ is symmetric and positive definite, it is usually parameterized as a (inverse) Wishart matrix \citep{dryden2009non} or a Gaussian symmetric matrix-variate distribution \citep{schwartzman2008inference}. In this paper, we assume that $\bm{A}_{v}$ follows an inverse Wishart distribution as
\begin{equation}
    \bm{A}_v|\bm{M}_{v},m\distas{indep.} \mathcal{IW}_p(\bm{M}_v,m),
\end{equation}
where $\mathcal{IW}_p(\bm{M}_v,m)$ is the inverse Wishart distribution parameterized (Appendix A) to have mean $\bm{M}_v$ and degrees of freedom $m>p+1$, and the DTs are independently distributed across $v$ given the mean matrices $\bm{M}_v$ and the degrees of freedom $m$. The mean matrices are modeled as a finite mixture of Wishart distributions, denoted as $[\bm{M}_v|g_v=k]:= \bm{V}_k$ where $g_v\in \{1,2,...,K\}$ is the latent cluster label. The prior of $\bm{V}_k$ is $\bm{V}_k\distas{i.i.d} \mathcal{W}_p(\bm{\Sigma},\nu)$ where $\mathcal{W}_p(\bm{\Sigma},\nu)$ is the Wishart distribution parameterized (Appendix A) to have mean $\bm{\Sigma}$ and degrees of freedom $\nu>p$.

Spatial dependence of the DTs is achieved through the dependence of the mean matrices $\bm{M}_v$. We induce spatial dependence via the latent cluster labels that follow a weighted Potts model, specified via the full conditional distributions:
\begin{equation}
    \begin{aligned}
    \mathcal{P}_k=P(g_{v}=k|\beta, \eta_k, \bm{g}_{-v})\propto \exp\left[\eta_k+\beta\sum_{u\in N_v}\mathcal{I}(g_{u}=k)\right],
    \end{aligned}
\end{equation}
where $\bm{g}_{-v}$ is the full set $\bm{g}=\{g_1, g_2, ..., g_n\}$ excluding $g_v$, $N_v$ is a set of indices of the neighboring voxels of $v$, and $\mathcal{I} [\mathcal{A}] = 1$ if event $\mathcal{A}$ is true and $\mathcal{I} [\mathcal{A}] = 0$ otherwise. Given $\bm{g}_{-v}$ but marginal over $g_v$, the distribution of $\bm{A}_v$ is the mixture of $K$ inverse Wishart distributions:
$$\sum_{k=1}^K \mathcal{P}_k \mathcal{IW}_p(\bm{A}_v|\bm{V}_k,m),$$
where $\mathcal{IW}_p(\bm{A}|\bm{V},m)$ is the inverse Wishart density function of $\bm{A}$ with the mean matrix $\bm{V}$ and the degrees of freedom $m$. Therefore, this semiparametric mixture model spans a rich class of density functions.

Via the Potts model, an image can be considered as a network whose nodes are the voxels. In this network, every voxel is connected to its neighboring voxels. The full conditional distribution of $g_v$ depends only on the voxels in the neighboring set $N_v$ and therefore the process is Markovian. Since the spatial parameter $\beta$ is the coefficient of the neighboring term $\sum_{u\in N_v}\mathcal{I}(g_{u}=k)$, the spatial parameter $\beta$ controls the dependence on the neighboring voxels. 

Unlike the classic Potts model \citep{wu1982potts}, the terms $\eta_k$ are added as offset terms controlling the overall mass on each cluster. We set $\eta_k=-k^\xi$ so the parameter $\xi>0$ is the concentration parameter controlling the homogeneity of the latent cluster labels. It is problematic to pre-specify the number of components $K$ in a mixture model \citep{mccullagh2008many} but the offset terms provide more weight on the key components and fewer weights on the trivial components. We fit the model by setting $K$ to be an upper bound on the number of active clusters and allow the data to determine the number of active clusters via estimation of $\xi$: if $\xi\rightarrow 0$, there are several active clusters; if $\xi$ is large, there are a few active clusters. As a result, the model is less sensitive to the number of components $K$ when the offset term $\eta_k$ is included. This claim is verified in the simulation studies (Section \ref{sec:sim}) and the real data application (Section \ref{sec:application}), where we find similar results for different $K$.

Quantifying spatial dependence is a vital issue in spatial statistics and neuroimaging. Since this model is for matrix-variate data, we use the expected squared Frobenius norm to measure dependence. The dependence between matrices $\bm{A}$ and $\bm{B}$ can be summarized as $\mathbb{E}||\bm{A}-\bm{B}||_F^2=\mathbb{E}Tr[(\bm{A}-\bm{B})^T(\bm{A}-\bm{B})]$. The norm increases as dependence decreases. If $\bm{A}$ and $\bm{B}$ are $1\times 1$, the expected squared Frobenius norm is the classic variogram \citep{cressie1992statistics} of spatial statistics. In this regard, the expected squared Frobenius norm can be treated as the variogram for matrix-variate data and useful in measuring spatial dependence. In the rest of the paper, we simply call the expected squared Frobenius norm the variogram.

For the Potts model described above, the variogram is $$\mathcal{V}(u, v)=\mathbb{E}||\bm{A}_{u}-\bm{A}_{v}||_F^2= \gamma(m,\nu,\bm{\Sigma}) P(u,v|\beta,\xi),$$ where $P(u,v|\beta,\xi)$ is the marginal (over all other cluster labels $\bm{g}$) probability of $g_u\not=g_v$, and $\gamma(m,\nu,\bm{\Sigma})$ is a measure of the variability in $\bm{A}_v|\bm{M}_v$ and variability of $\bm{V}_k$ across $K$. Therefore, the multivariate spatial dependence structure is separable \citep{cressie1992statistics} in that the dependence is the product of a non-spatial term $\gamma(m,\nu,\bm{\Sigma})$ that controls cross dependence and a spatial term $P(u,v|\beta,\xi)$ that controls spatial dependence.

We give the expression of $\gamma(m,\nu,\bm{\Sigma})$ in Appendix B. When $p=3$ and $\bm{\Sigma}=\bm{I}$, the non-spatial term has an expression that is $ \frac{12(m+\nu-4)(2m-7)}{\nu(m-3)(m-6)}$ where $m>6$ and $\nu>3$. Therefore, in this special case, the cross dependence decreases if $m$ or $\nu$ is larger (See Figure \ref{fig:density}).

\begin{figure}[ht!]
    \centering
    \includegraphics[width=0.4\textwidth]{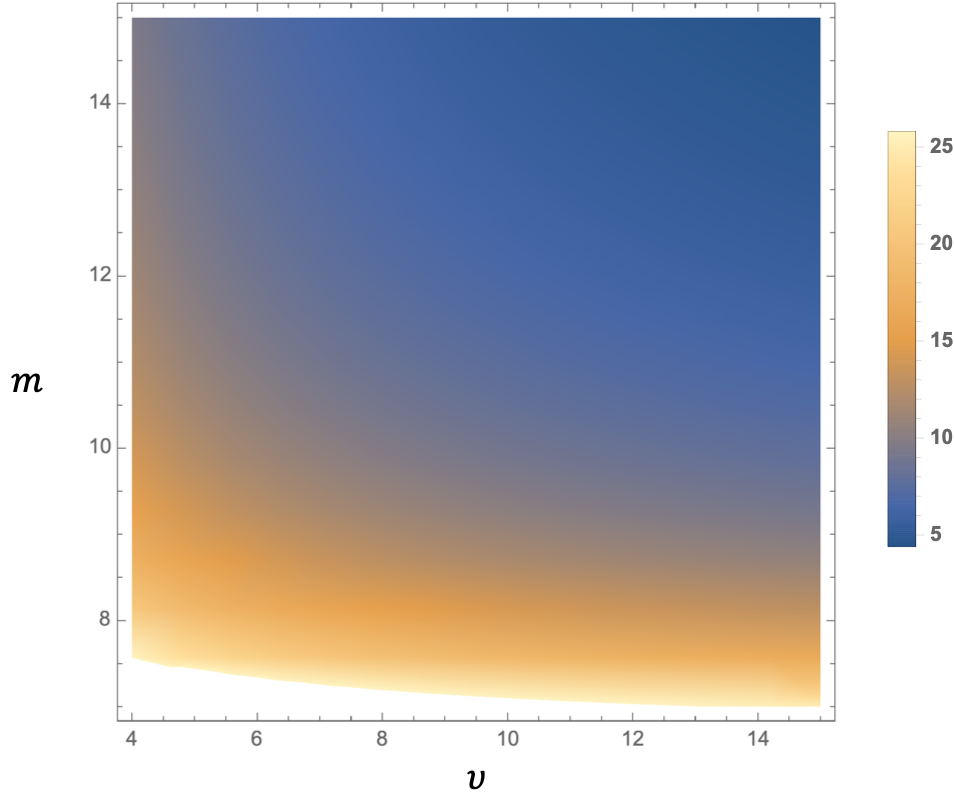}
    \caption{The density plot of $\gamma(m,\nu,\bm{\Sigma})$ when $p=3$ and $\bm{\Sigma}=\bm{I}$.}
    \label{fig:density}
\end{figure}

The spatial term $P(u,v|\beta,\xi)$ is intractable and so we use a Monte Carlo approximation to study the function. In Figure \ref{fig:function}, the function is computed under the scenario that the image is a 1-D grid with $K=100$ and $\xi=0$. The spatial term $P(u,v|\beta,\xi)$ increases with distance and larger spatial parameter $\beta$ leads to the stronger spatial dependence. We also compute the function value with different $K$ in Figure \ref{fig:function_K}. We have $\lim_{|u-v|\rightarrow\infty}P(u,v|\beta,\xi)=1-\frac{1}{K}$, where relevant result can be found in studies of extreme value analysis \citep[see][]{reich2018spatial}. Increasing $K$ leads to smaller spatial dependence. Hence, we fix $K$ to be large to eliminate long-range dependence (i.e., $P(u,v|\beta,\xi)<1$ for large $|u-v|$) and estimate $\beta$ to capture local dependence.

\begin{figure}[ht!]
    \centering
    \begin{subfigure}[ht!]{0.45\textwidth}
    \includegraphics[width=\textwidth]{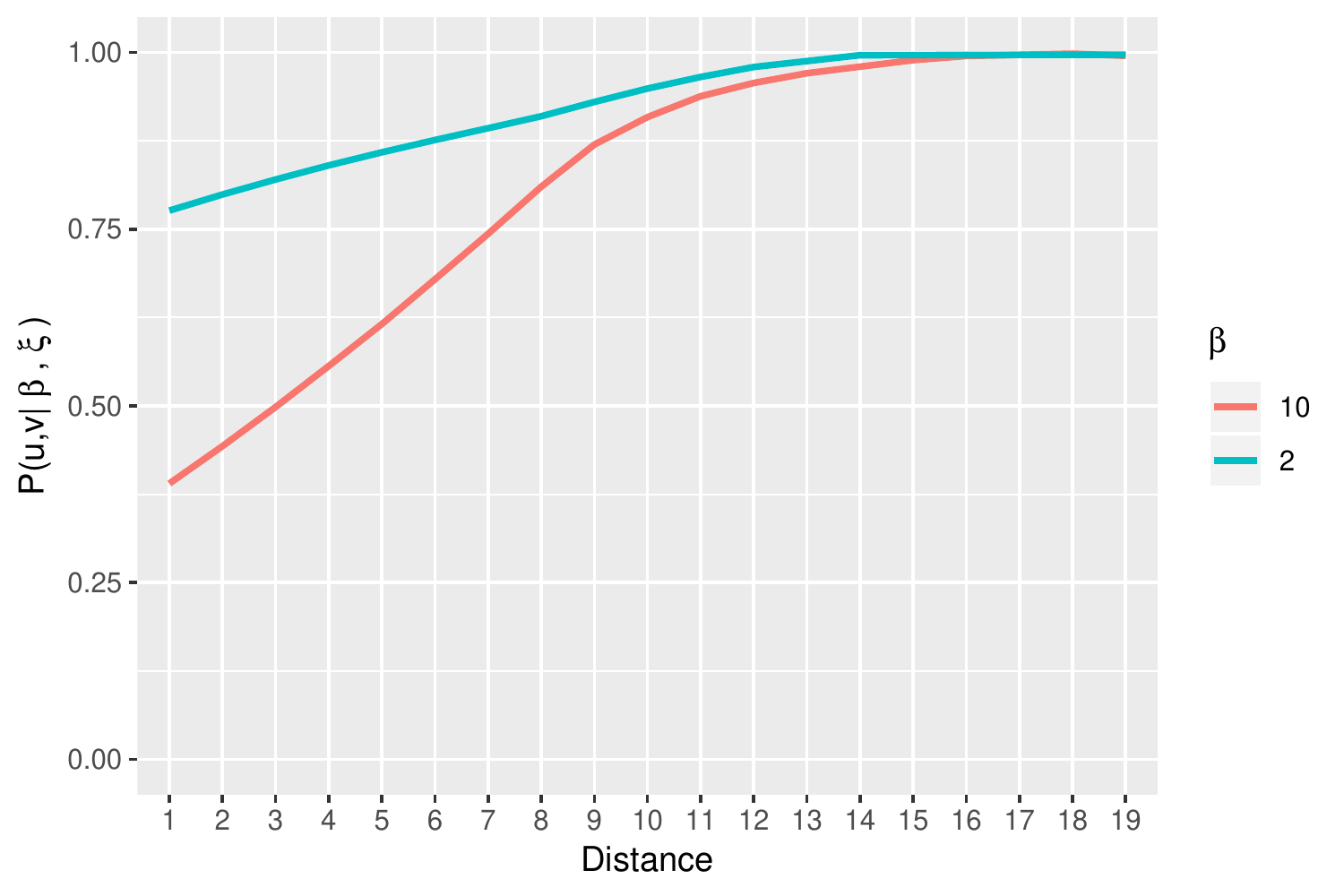}
     \caption{$K=100$}
     \label{fig:function}
     \end{subfigure}
     ~
         \begin{subfigure}[ht!]{0.45\textwidth}
    \includegraphics[width=\textwidth]{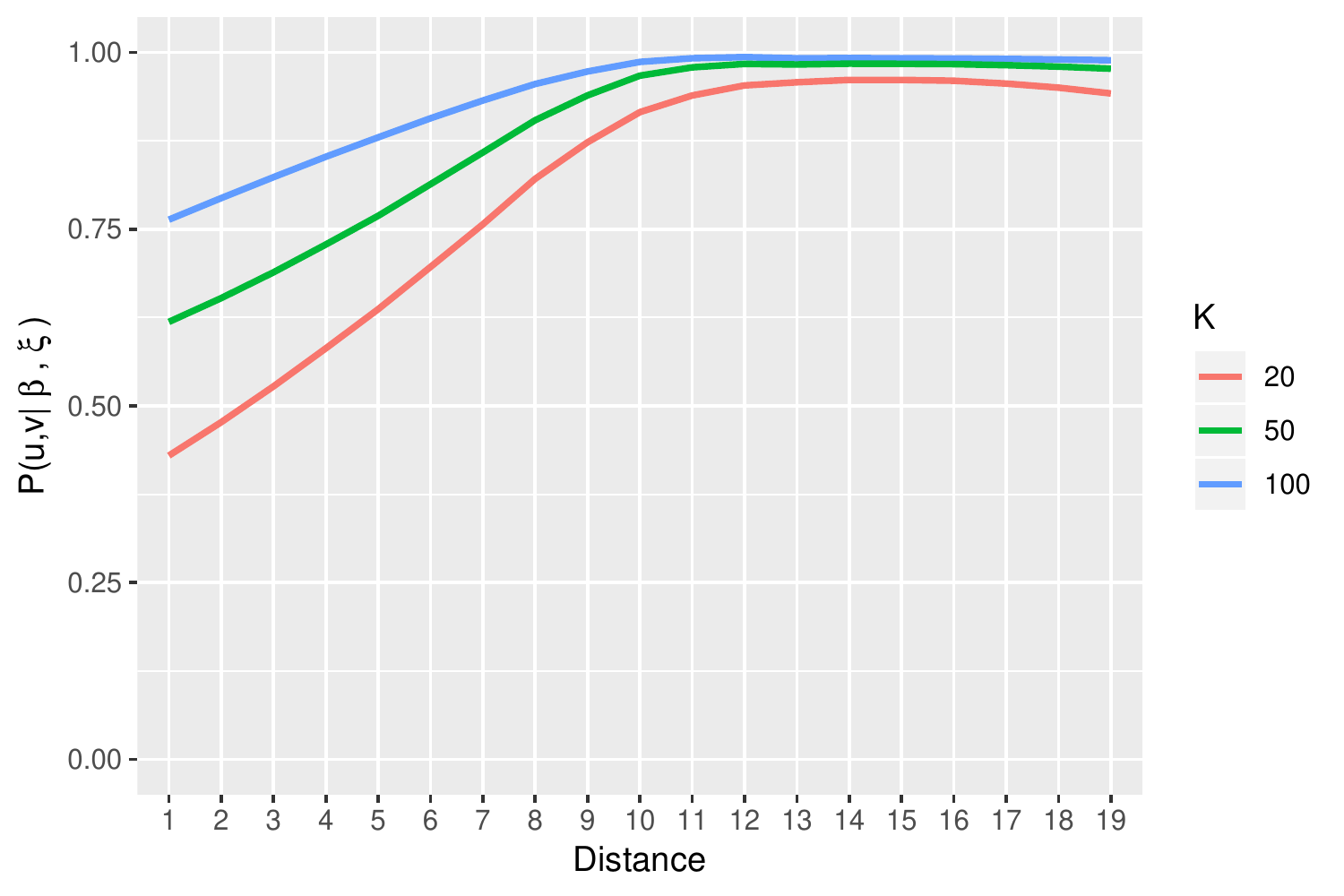}
     \caption{$\beta=10$}
      \label{fig:function_K}
     \end{subfigure}
     \caption{Monte Carlo approximation of the spatial term $P(u,v|\beta,\xi)$. The value varies depending on the distance $|u-v|$, the number of clusters $K$, and the spatial parameter $\beta$; $\xi=0$.}
\end{figure}

For a more intuitive understanding of this model, we also simulate the DTs and visualize the DTs as ellipsoids in a $40\times 40$ grid. In these simulations, we use $\xi=0$, $m=4$, and $\nu=30$. In Figure \ref{fig:sim1}, the DTs within the same latent cluster label are similar to each other, indicating that spatial dependence of the DTs can be achieved by the latent cluster labels following the Potts model. In Figure \ref{fig:sim2}, larger spatial parameter $\beta$ leads to a realization with more dependence on their neighbors. Figure \ref{fig:sim} also illustrates that the Potts model allows sharp breaks, which is desirable if neighboring voxels are in different tracts.

\begin{figure}[ht!]
    \centering
    \begin{subfigure}[ht!]{0.8\textwidth}
       \includegraphics[width=\textwidth]{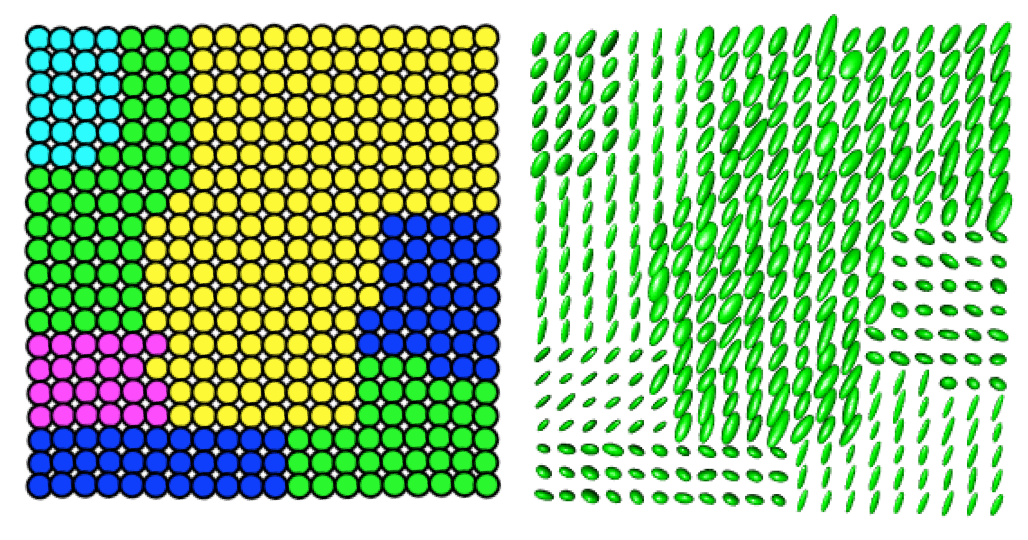}
       \caption{The left panel is the latent cluster labels $g_{v}$ and each color denotes for a distinct latent cluster label; The right panel is the corresponding simulated DTs $\bm{A}_i(\bm{s})$.}\label{fig:sim1}
    \end{subfigure}
     \begin{subfigure}[ht!]{0.8\textwidth}
    \includegraphics[width=\textwidth]{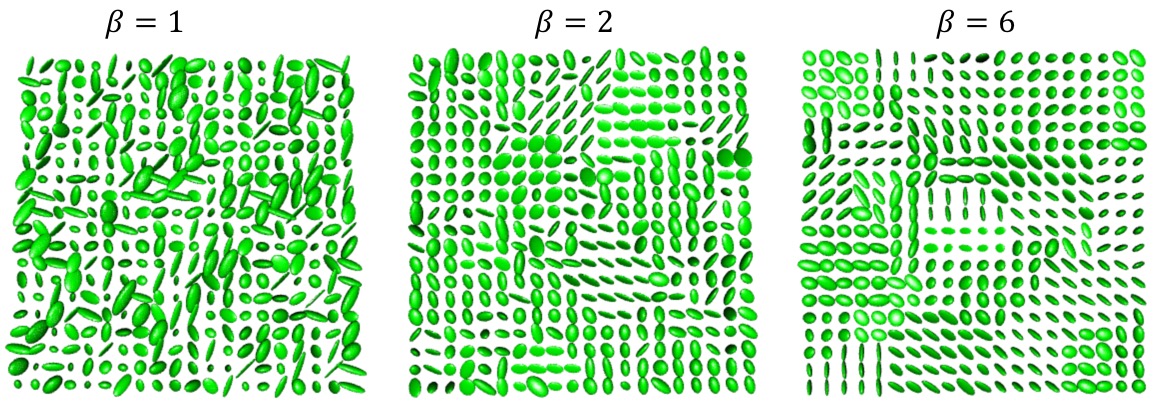}
    \caption{The panels from the left to the right are simulated DTs under the models with $\beta=1,2,6$; $K=20$; $\xi=0$.}\label{fig:sim2}
    \end{subfigure}
    \caption{Simulated DTs based on the proposed model.}\label{fig:sim}
\end{figure}

\subsection{Muti-Subject Model}
\label{sec:multi}
Motivated by the cocaine users data set \citep{ma2017preliminary} that includes 11 cocaine users and 11 non-cocaine users, we extend the single-subject model to the multi-subject setting. The clinical objective is to analyze the brain's physical structure for differences between the two groups. The objective can be statistically formulated as finding regions in the brain where the distribution of the DTs across the subjects is different between cocaine users and non-cocaine users.

Let $\bm{A}_{iv}$ be the DT and $g_{iv}$ be the cluster label for voxel $v\in\{1,2,...,n\}$ and subject $i\in\{1,2,...,N\}$. By extending $g_v$ to $g_{iv}$, the subject-level cluster labels not only model intra-subject spatial dependence but also allow inter-subject variability. As in the single-subject model, the DTs are conditionally independent given the random matrices $\bm{M}_{iv}$ following a finite mixture model:
\begin{equation}
    \bm{A}_{iv}|\bm{M}_{iv}, m\distas{indep.} \mathcal{IW}_p(\bm{M}_{iv},m),\ \bm{M}_{iv}:= \bm{V}_{g_{iv}},\ \bm{V}_k\distas{i.i.d} \mathcal{W}_p(\bm{\Sigma},\nu).
\end{equation}
However, the latent Potts model is generalized to account for multiple subjects. We define $x_i$ as the binary group indicator of subject $i$. In the motivating data, cocaine users have $x_i=1$ and non-cocaine users have $x_i=0$. To model intra-subject spatial dependence within a group, we extend the latent cluster process by introducing the group-level cluster labels $h_{x v}$ for group $x\in\{0,1\}$ and voxel $v\in\{1,2,...,n\}$. Both $h_{x v}$ and $g_{iv}$ are also spatially dependent with full conditional distributions
\begin{equation}
\label{eq:cond}
\small
    \begin{aligned}
    P(g_{iv}=k|\alpha,\beta,\xi,\bm{g}_{-iv},\bm{h})&\propto \exp\left[-k^{\xi}+\beta\sum_{u\in N_v}\mathcal{I}(g_{iu}=k)+\alpha\mathcal{I}(h_{x_iv}=k)\right]\\
    P(h_{xv}=k|\alpha,\beta,\bm{h}_{-xv},\bm{g})&\propto \exp\left[\beta\sum_{u\in N_v}\mathcal{I}(h_{xu}=k)+\sum_{j:x_{j}=x}\alpha\mathcal{I}(g_{jv}=k)\right],
    \end{aligned}
\end{equation}
where $\bm{g}_{-(iv)}$ is the set on $\bm{g}_i=\{g_{i1}, ..., g_{in}\}$ excluding $g_{iv}$, $\bm{h}_{-(xv)}$ is the set $\bm{h}_x=\{h_{x1}, ..., h_{xn}\}$ excluding ${h}_{xv}$, $\bm{g}$ is the set on $\{\bm{g}_1, ..., \bm{g}_N\}$, and $\bm{h}$ is the set on $\{\bm{h}_0,\bm{h}_1\}$. The joint probability mass function (PMF) of $\{\bm{g}_1, \bm{g}_2, ..., \bm{g}_N\}\cup\{\bm{h}_0, \bm{h}_1\}$ is given in Appendix C. Since the conditional densities in (\ref{eq:cond}) satisfy the conditions of the Hammersley-Clifford Theorem \citep{clifford1990markov}, the existence of joint distribution of $\{\bm{g}_1, \bm{g}_2, ..., \bm{g}_N\}\cup\{\bm{h}_0, \bm{h}_1\}$ is guaranteed (Appendix C).

A graphical representation of this latent Potts model is provided in Figure \ref{fig:random_field}. Cluster labels $\bm{g}_i$ and $\bm{h}_x$ can be understood as the spatial pattern of subject $i$ and the general spatial pattern of subjects in group $x$, respectively. In comparison to the single-subject model, the group-clustering parameter $\alpha$ is introduced for modeling multiple subjects. If $\alpha=0$, $\bm{A}_{iv}$ is independently distributed over subjects; otherwise, the subject-level cluster label $g_{iv}$ depends on the group-level cluster label $h_{x_iv}$, leading to the smaller inter-subject variability of spatial dependence pattern within one group. 

\begin{figure}[ht!]
    \centering
    \includegraphics[width=1\textwidth]{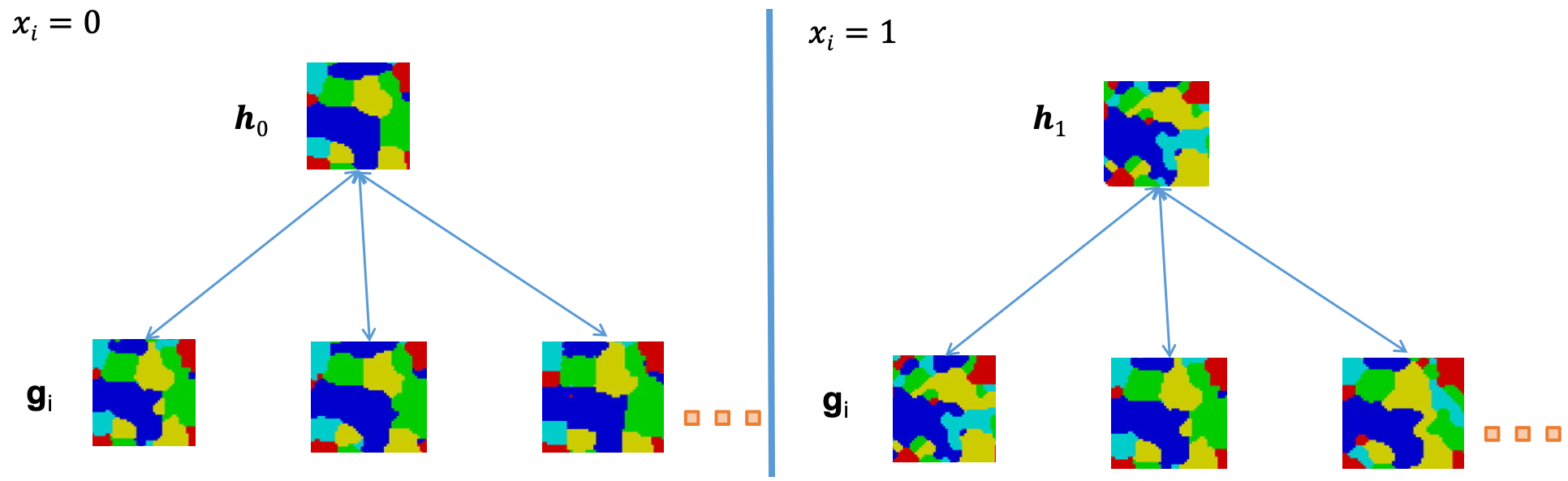}
    \caption{The graphical representation of the latent cluster labels. The group-level cluster labels are $\bm{h}_x=\{h_{x1}, ..., h_{xn}\}$ and the subject-level cluster labels are $\bm{g}_i=\{g_{i1}, ..., g_{in}\}$. Cluster labels $\bm{h}_x$ and $\bm{g}_i$ are mutually dependent. The subject-level cluster labels $\bm{g}_i$ have inter-subject variability. The group-level cluster labels $\bm{h}_x$ are a summary of the spatial dependence of all subjects.}
    \label{fig:random_field}
\end{figure}

 To further understand the role of $\alpha$ and $h_{xv}$, we inspect the density of $\bm{A}_{iv}$ conditioned on $h_{xv}$ and marginal over all other labels (Appendix D). The conditional density of $\bm{A}_{iv}$ given $x_i=x$ and $h_{xv}$ is the mixture of inverse Wishart distributions proportional to 
\begin{equation}
\label{eq:cluster}
    \sum_{k=1}^K \exp\left[-k^{\xi} + \alpha\mathcal{I}(h_{xv}=k)\right] \mathcal{IW}_p(\bm{A}_{iv}|\bm{V}_k,m),
\end{equation}
 where the term $\exp\left[-k^{\xi} + \alpha\mathcal{I}(h_{xv}=k)\right]$ is the proportional weight of cluster $k$. Therefore, $h_{xv}=k$ elevates the mass on mixture component $k$ at voxel $v$ for all subjects with $x_i=x$. Assuming $\alpha>0$, the conditional density (\ref{eq:cluster}) depends on $x_i$ if and only if $h_{0v}\ne h_{1v}$.
 
 The clinical objective is to find regions of differences between two groups, which can be formulated into finding voxels for which the distribution of $\bm{A}_{iv}$ is different for $x_i=0$ or $x_i=1$. As shown in the conditional density (\ref{eq:cluster}), the test can be simplified to the test that
      \begin{equation}
      \label{eq:hypo}
            \begin{aligned}
            &\mathcal{H}_{ov}: h_{0v}=h_{1v}\\
            &\mathcal{H}_{av}: h_{0v}\not=h_{1v}.
            \end{aligned}
        \end{equation}
Bayesian inference provides estimates of the posterior probabilities of the hypotheses, which is further discussed in Section \ref{sec:computing}.

We investigate spatial dependence as in the single-subject model. To measure spatial dependence within and across subjects, we propose the variogram \begin{equation}
\label{eq:multi_vario}
    \mathcal{V}_{ij}(u,v)=\mathbb{E}||\bm{A}_{iu}-\bm{A}_{jv}||_F^2=\gamma(m,\nu,\bm{\Sigma})P_{ij}(u,v|\alpha,\beta,\xi)
\end{equation} with $i=j$ for individual variogram and $i\not=j$ for inter-subject variogram, respectively. For inter-subject variogram, we also compare the within-group variogram for subjects with $x_i=x_j$ and between-group variogram for subjects with $x_i\ne x_j$. Both individual variogram and inter-subject variogram are also separable \citep{cressie1992statistics}. $\gamma(m,\nu,\bm{\Sigma})$ is the non-spatial term which has been discussed in Section \ref{sec:single}. The spatial term $P_{ij}(u,v|\alpha,\beta,\xi)$ is the marginal probability of $g_{iu}\not=g_{jv}$.

In Figure \ref{fig:function_multi}, the function $P_{ii}(u,v|\alpha,\beta,\xi)$ is computed using a Monte Carlo approximation under the scenario that the image is a 1-D grid with $N=5$ and $K=100$. The spatial parameter $\beta$ largely controls the within-subject dependence. In Figure \ref{fig:within_function_multi} plotting $P_{ij}(u,v|\alpha,\beta,\xi)$, larger $\alpha$ leads to more dependence in the within-group variogram. Therefore, $\alpha$ controls the dependence of subjects within one group. Since $g_{iu}$ and $g_{ju}$ are assumed to be independent, the spatial term $P_{ij}(u,v|\alpha,\beta,\xi)$ is a constant $1-\frac{1}{K}$ in between-group variogram (Figure \ref{fig:cross_function_multi}). 
\begin{figure}[ht!]
    \centering
    \begin{subfigure}[ht!]{0.3\textwidth}
        \includegraphics[width=\textwidth]{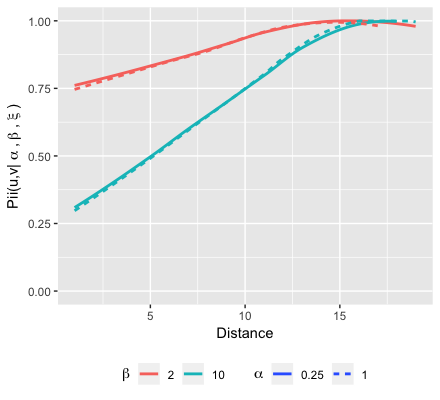}
        \caption{Individual variogram}
        \label{fig:function_multi}
    \end{subfigure}
      \begin{subfigure}[ht!]{0.3\textwidth}
        \includegraphics[width=\textwidth]{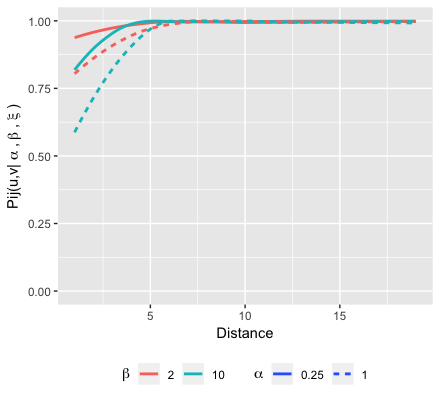}
        \caption{Within-group variogram}
        \label{fig:within_function_multi}
    \end{subfigure}
      \begin{subfigure}[ht!]{0.3\textwidth}
        \includegraphics[width=\textwidth]{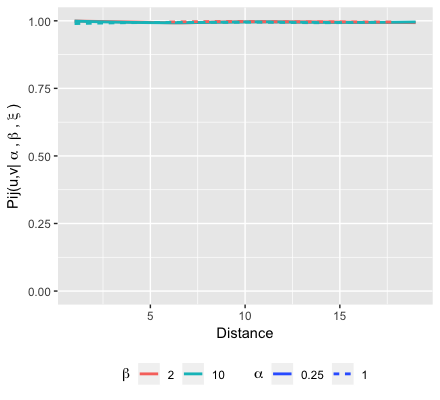}
        \caption{Between-group variogram}
        \label{fig:cross_function_multi}
    \end{subfigure}
    \caption{Monte Carlo approximation of the spatial term $P_{ij}(u,v|\alpha,\beta,\xi)$. The value varies depending on the distance $|u-v|$, the group-clustering parameter $\alpha$, and the spatial parameter $\beta$; $\xi=0$.}\label{fig:theoretical_vario}
\end{figure}

\section{Computation}
\label{sec:computing}
We use MCMC to fit the model described in Section \ref{sec:model} (Appendix E). The codes are written in hybrid \proglang{R} and \proglang{C++} codes. The final model is
\begin{equation}
    \begin{aligned}
    &\bm{A}_{iv}|\bm{M}_{iv},m\distas{indep.} \mathcal{IW}_p(\bm{M}_{iv},m),\ \bm{M}_{iv}:= \bm{V}_{g_{iv}},\ \bm{V}_k\distas{i.i.d} \mathcal{W}_p(\bm{\Sigma},\nu)\\
    &P(g_{iv}=k|\alpha,\beta,\xi,\bm{g}_{-(iv)},\bm{h})\propto \exp\left[-k^{\xi}+\beta\sum_{u\in N_v}\mathcal{I}(g_{iu}=k)+\alpha\mathcal{I}(h_{x_iv}=k)\right]\\
    & P(h_{xv}=k|\alpha,\beta,\bm{h}_{-(xv)},\bm{g})\propto \exp\left[\beta\sum_{u\in N_v}\mathcal{I}(h_{xu}=k)+\sum_{j:x_{j}=x}\alpha\mathcal{I}(g_{jv}=k)\right]
    \end{aligned}
\end{equation}
which is referred to as the \textit{Potts Model} in the rest of the paper. Using the moment method \citep{robert2007bayesian}[Section 3.2.4], we set $\bm{\Sigma}$ as the sample mean of all observed DTs. The priori information brought by $\bm{\Sigma}$ has a little impact if the number of observations is large. We put uniform prior for the degrees of freedom $m$ and $\nu$ on $[5,50]\times [4,50]$. Following \citet{liang2010double}, we put a uniform prior for $\bm{\theta}=\{\alpha,\beta,\xi\}$ on $[0,20]\times [0,20] \times [0, 1]$, denoted as $\pi(\bm{\theta})$. Below we describe the updating rule for each parameter.

The MCMC algorithm is a combination of Gibbs and Metropolis-Hastings steps. The latent mean matrices and cluster labels are updated via Gibbs steps. Their full conditional distributions are 
\begin{itemize}
\small
    \item $\bm{V}_k|.\sim\mathcal{W}_p( (\bm{\Sigma}^{-1}\nu+(m-p-1)\sum_i\sum_{v:g_{iv}=k}\bm{A}_{iv}^{-1})^{-1}(N n_{k}m+\nu),N n_{k}m+\nu)$ \\
    \item $P(g_{iv}=k|.)\propto \mathcal{IW}_p(\bm{A}_{iv}|\bm{V}_{g_{iv}},m)\exp\left[-k^{\xi}+\beta\sum_{u\in N_v}\mathcal{I}(g_{iu}=k)+\alpha\mathcal{I}(h_{x_i, v}=k)\right]$
    \item $P(h_{xv}=k|.)\propto \exp\left[\beta\sum_{u\in N_v}\mathcal{I}(h_{xu}=k)+\alpha\sum_{j:x_{j}=x}\mathcal{I}(g_{jv}=k)\right]$
\end{itemize}
where $n_k=\sum_{i,v}\mathcal{I}(g_{iv}=k)$. In addition, $P(g_{iv}=k|.)$ and $P(h_{xv}=k|.)$ can be updated in parallel over $i$ and $x$, respectively. Since the uniform prior is not conjugate, we have to sample $[m|.]$ and $[\nu|.]$ via Metropolis-Hastings sampling with log-normal random walk as proposal distribution. 

To select regions of differences via Bayesian hypothesis testing, we reject the null hypothesis in (\ref{eq:hypo}) if $P(h_{0v}\not=h_{1v}|.)<P(h_{0v}=h_{1v}|.)$. The posterior probabilities can be estimated through MCMC samples that $h_{0v}=h_{1v}$ or $h_{0v}\not=h_{1v}$.

Updating the Potts hyperparameters $\alpha$, $\beta$, and $\xi$ is problematic because the normalizing constant in the joint distribution function of the cluster labels is intractable (see the joint PMF in Appendix C). A simple approach is to estimate the parameters outside of MCMC. The plug-in values can be obtained from cross-validation  \citep[i.e.,][]{goldsmith2014smooth}, pseudo-likelihood comparison \citep[i.e.,][]{zhao2014bayesian,lan2016bayesian}, or by comparing empirical and model-based variograms (i.e., Figure \ref{fig:empi} and \ref{fig:MCMC}). However, these methods fail to account for uncertainty about these imputed parameters and so we update them using the double Metropolis-Hastings algorithm \citep{liang2010double}.

\citet{park2018bayesian} review several Monte Carlo methods for models with intractable normalizing constants and recommend the double Metropolis-Hastings algorithm proposed by \citet{liang2010double} because of its ease of implementation and computational efficiency. \citet{li2018bayesian} combine the double Metropolis–Hastings algorithm with usual Bayesian tools for implementing the Potts model. The double Metropolis-Hastings update for $\bm{\theta}$ begins with a candidate $\bm{\theta}'$ drawn from $q(\bm{\theta}|\widetilde{\bm{\theta}})$ where $\widetilde{\bm{\theta}}$ is the current value and $q(\bm{\theta}|\widetilde{\bm{\theta}})$ is a log-normal random walk transitional probability centered at $\widetilde{\bm{\theta}}$. Given the candidate $\bm{\theta}'$, we draw labels $\bm{g}_i'=\{g_{iv}',...,g_{in}'\}$ and $\bm{h}_x'=\{h_{xv}', ..., h_{xv}'\}$ using Gibbs sampling for each $i$ and $x$, respectively. The candidate $\bm{\theta}'$ is accepted with the probability $\min (1, r)$ where $r=\frac{\pi(\bm{\theta}')\mathcal{P}(\bm{g}',\bm{h}'|\widetilde{\bm{\theta}})\mathcal{P}(\widetilde{\bm{g}},\widetilde{\bm{h}}|{\bm{\theta}'})}{\pi(\widetilde{\bm{\theta})}\mathcal{P}(\widetilde{\bm{g}},\widetilde{\bm{h}}|\widetilde{\bm{\theta}})\mathcal{P}(\bm{g}',\bm{h}'|{\bm{\theta}'})}$ where $\mathcal{P}(\bm{g},\bm{h}|{\bm{\theta}})$ is the likelihood of $\{\bm{g}_1, \bm{g}_2, ..., \bm{g}_N\}\cup\{\bm{h}_0, \bm{h}_1\}$ conditioned on $\bm{\theta}$. $\widetilde{\bm{g}}_{i}$ and $\widetilde{\bm{h}}_{x}$ are current values. Due to the concern that the double Metropolis-Hastings algorithm is not an exact sampling \citep{liang2010double,park2018bayesian}, the MCMC convergence of $\bm{\theta}$ in the simulation studies (Section \ref{sec:sim}) and the real data analysis (Section \ref{sec:application}) are monitored by Heidelberger and Welch's convergence diagnostic \citep{heidelberger1981spectral}.

\section{Simulation}
\label{sec:sim}
In this section, we illustrate the performance of our method using two simulation studies under different scenarios for synthetic data. We compare our method to the non-spatial DTI inference method the Gaussian symmetric matrix model \citep{schwartzman2008inference} referred to as the \textit{Random Ellipsoid Model}. The \textit{Random Ellipsoid Model} is a non-spatial matrix-variate method and assumes that $\bm{A}_{iv}$ follows a Gaussian symmetric random matrix distribution with the probability density function (PDF) as $f(\bm{A}_{iv}|x_i=x,\bm{\Sigma}_v,\sigma^2)=H(\bm{A}_{iv})\exp\left[\frac{1}{2\sigma^2}Tr(2\bm{\Sigma}_{xv}\bm{A}_{iv}-\bm{\Sigma}_{xv}\bm{\Sigma}_{xv})\right]$, where $\bm{\Sigma}_{xv}$ is the DT's population mean at voxel $v$ of group $x$ and $\sigma^2$ is the nuisance parameter. The \textit{Random Ellipsoid Model} selects regions of differences via testing $\bm{\Sigma}_{0v}=\bm{\Sigma}_{1v}$, where the test statistics are constructed by maximum likelihood estimations. the \textit{Potts Model} has $8,000$ MCMC samples with $3,000$ discarded as burn-in. Methods are evaluated in terms of true positive rate (TPR), false positive rate (FPR), false discovery rate (FDR), and typical computation time. 

We first investigate the performance of our method when the data are generated from a mixture model. We use a $40\times 40$ grid with spacing 1 between adjacent grid points as an image. Each simulated data set consists of 5 subjects in the control group ($x_i=0$) and 5 subjects in the treatment group ($x_i=1$). For the control group ($x_i=0$), we equally partition the graph into 4 parts by rectangular regions so that $g_{iv}\in\{1,2,3,4\}$, ordered by right-to-left. Thus each region is a $40\times 10$ region. The treatment group has the same partition as the control group, except a $10\times10$ region at the middle of the second region where $g_{iv}=5$. This simulates the brain with a small region of difference between the two groups. For each simulation, $\bm{\Sigma}_k$ is generated based on the model $$\bm{\Sigma}_k\sim\mathcal{W}_3((k+1)\bm{I}_3, 30)\text{ for $k=1,2,3,4$};\ \bm{\Sigma}_5\sim\mathcal{W}_3(1.5\bm{I}_3, 30).$$ Given the simulated $\bm{\Sigma}_k$, the data are generated based on the model $\bm{A}_{iv}|\bm{\Sigma}_{g_{iv}}\sim \mathcal{IW}_3(\bm{\Sigma}_{g_{iv}},5)$. 

Our model with $K=10, 50, 100$ is compared to the \textit{Random Ellipsoid Model}. The results averaged over 50 data sets are summarized in Table \ref{t:sim1}. Our model has significantly improved performance in terms of the TPR, FPR, FDR in comparison to the \textit{Random Ellipsoid Model}. The small number of subjects might be one of the causes that the alternative produces low TPR. \citet{schwartzman2008inference} discusses that the accuracy of maximum likelihood estimates of the \textit{Random Ellipsoid Model} is dependent on the number of subjects. Since the choice of $K$ does not affect selection accuracy, the simulation results also support the claim that the model can be less sensitive to the number of clusters if $K$ is larger than the true clusters. 

To determine robustness to model misspecification, we also simulate data from the spatial Cholesky process described as follows: The DT matrix for subject $i$ at voxel $v$ is determined by six independent spatial Gaussian processes $U_{ivk}$ ($k\in\{1,2,...,6\}$). These spatial Gaussian processes are arranged in the lower triangular matrix $\bm{L}_{iv}$ with $\bm{L}_{iv}=\begin{bmatrix}e^{U_{iv1}} &0 & 0\\
U_{iv4} & e^{U_{iv2}} &0\\
U_{iv5} & U_{iv6} & e^{U_{iv3}}\end{bmatrix}$. The responses $\bm{A}_{iv}$ are then constructed as $\bm{A}_{iv} = \bm{L}_{iv}\bm{L}_{iv}^T$, thereby introducing spatial dependence and guaranteeing positive definite of responses. We again use a $40\times 40$ grid with spacing 1 between adjacent grid points as an image. There are 10 subjects in the control group and 10 subjects in the treatment group. The six spatial Gaussian processes are simulated with variance $\tau^2=0.1$ and exponential correlation function with range parameter $\rho=2$. The mean of the six Gaussian processes are all $0$ except for treatment subjects' $10\times10$ region in the center of the image where $U_{ivk}$ has mean $0.5$ for $k\leq3$ and $0.25$ for $k>3$. This simulates the brain with a small region of difference between the two groups. We compare our model with $K=10, 50, 100$ to the \textit{Random Ellipsoid Model}. The results averaged over 50 simulations are summarized in Table \ref{t:sim2}. The results demonstrate that our model maintains good performance, indicating the \textit{Potts Model} is robust to this form of misspecification. In addition, under the spatial dependence assumption, the spatial models produce an overall better performance than the non-spatial model.

A problematic issue to the use of Bayesian methods in neuroimaging data is their heavy computational burden. In both simulations, the \textit{Potts Model} has a computational speed within a few hours. The \textit{Random Ellipsoid Model} avoids the expensive MCMC, but since the performance of the \textit{Random Ellipsoid Model} is too conservative, the \textit{Potts Model} is a reasonable trade-off.

\begin{center}
\begin{table}[ht!]
\caption{The simulation results. The true positive rate, false positive rate, false discovery rate, and typical computation time of the \textit{Potts Model} and \textit{Random Ellipsoid Model} are summarized.}

\begin{subtable}[!ht]{\textwidth}\centering
\caption{Data generated from mixture models}\label{t:sim1}
\begin{tabular}{ccccc}
\hline\hline
\multirow{2}{*}{Method} & \multicolumn{3}{c}{\textit{Potts}}  &\multirow{2}{*}{\textit{\begin{tabular}[c]{@{}c@{}}Random \\ Ellipsoid\end{tabular}}} \\
 & K=10 & K=50 & K=100 &  \\ \hline
TPR & 0.99 & 0.97 & 0.98 &0.29\\
FPR & 0.013 & 0.010 & 0.009  &0.00\\
FDR & 0.025 & 0.025 & 0.024  &0.002\\ \hline
Time (hours) & 0.5 & 0.8 & 1.0 &$<$0.01\\ \hline\hline\\
\end{tabular}
\end{subtable}

\begin{subtable}[!ht]{\textwidth}\centering
\caption{Data generated from the spatial Cholesky process }\label{t:sim2}
\begin{tabular}{cccccc}
\hline\hline
\multirow{2}{*}{Method} & \multicolumn{3}{c}{\textit{Potts}}  & \multirow{2}{*}{\textit{\begin{tabular}[c]{@{}c@{}}Random \\ Ellipsoid\end{tabular}}} \\
 & K=10 & K=50 & K=100 &  &  \\ \hline
TPR & 0.79 & 0.80 & 0.77 &  0.50 \\
FPR & 0.01 & 0.01 & 0.01  & 0.00 \\
FDR & 0.03 & 0.03 & 0.02  & 0.03 \\\hline
Time (hours) & 1.0 & 1.5 & 1.8  &$<$0.01\\\hline\hline
\end{tabular}
\end{subtable}
\end{table}
\end{center}

\section{Real Data Application}
\label{sec:application}
We apply this model to the data set of cocaine users  \citep{ma2017preliminary} described in Section \ref{sec:model}. The data are provided by the Institute for Drug and Alcohol Studies of Virginia Commonwealth University (VCU). The study recruited 11 cocaine users and 11 controls to test for microstructural changes of the brain through DTI. In the data analysis, we focus on the corpus callosum containing 15,273 voxels because this region plays important roles such as transferring motor, sensory, and cognitive information between the brain hemispheres \citep{ma2009diffusion}. Conventionally, studies on cocaine use focus on this region \citep[i.e.,][]{ma2009diffusion,ma2017preliminary,lane2010diffusion}. 

Before model fitting, we examine the fit of the proposed model to the cocaine users data via empirical estimates of variograms. We denote $\hat{\mathcal{V}}_{ij}(d)=\frac{1}{N_d}\sum_{|u-v|=d}||\bm{A}_{iu}-\bm{A}_{jv}||_F^2$ as the empirical variogram value of subjects $i$ and $j$ at distance $d$, where $N_d$ is the number of pairs with $|u-v|=d$. We plot these empirical variograms of the motivating data in Figure \ref{fig:empi}. The DTs have a strong within-subject spatial dependence (Figure \ref{fig:emp_function_multi}). The empirical within-group variogram (Figure \ref{fig:emp_within_function_multi}) also increases with distance, indicating inter-subject dependence within a group, however, the empirical variogram is almost flat in the between-group variogram (Figure \ref{fig:emp_cross_function_multi}), which suggests that the subjects are independent if they are in different groups. Since these empirical variograms perfectly match the theoretical variograms in Figure \ref{fig:theoretical_vario}, the the hierarchical Potts  model assumptions about spatial dependence are reasonable. 
\begin{figure}[ht!]
    \centering
    \begin{subfigure}[ht!]{0.3\textwidth}
        \includegraphics[width=\textwidth]{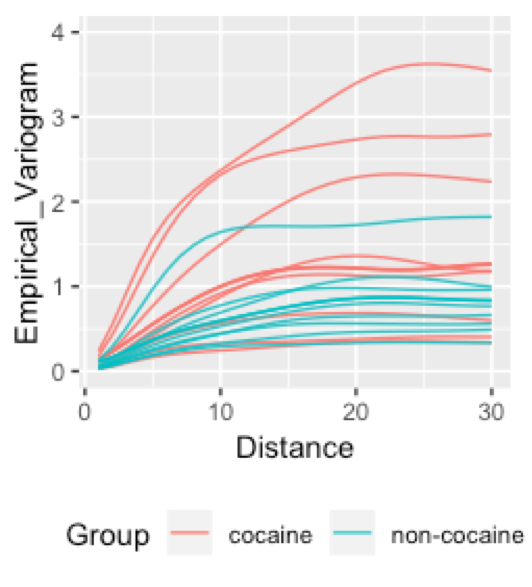}
        \caption{Individual variograms}
        \label{fig:emp_function_multi}
    \end{subfigure}
    ~ 
      \begin{subfigure}[ht!]{0.3\textwidth}
        \includegraphics[width=\textwidth]{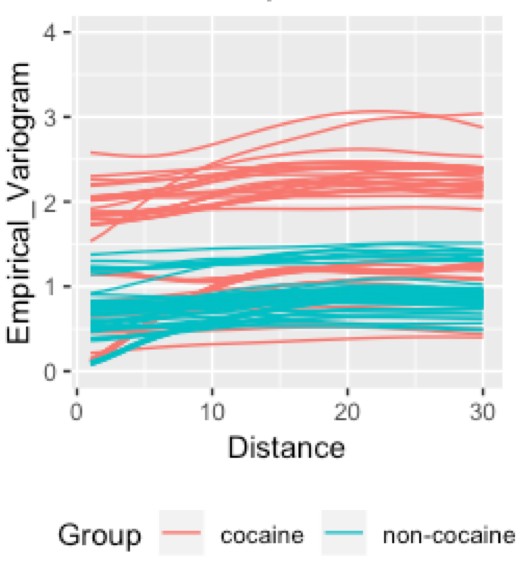}
        \caption{Within-group variograms}
        \label{fig:emp_within_function_multi}
    \end{subfigure}
      ~ 
      \begin{subfigure}[ht!]{0.3\textwidth}
        \includegraphics[width=\textwidth]{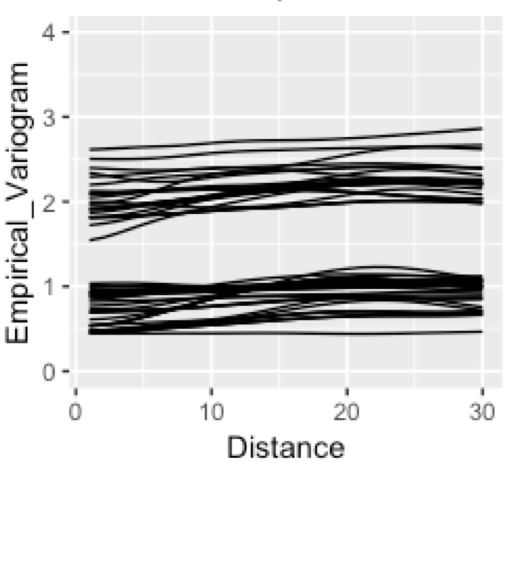}
        \caption{Between-group variograms}
        \label{fig:emp_cross_function_multi}
    \end{subfigure}
    \caption{The empirical variograms of the cocaine users data \citep{ma2017preliminary}. The lines are the empirical variograms for each subject/pair.\label{fig:empi}}
\end{figure}

We sample $11,000$ MCMC samples with $3,000$ discarded as burn-in and it typically takes $5$ hours using a CPU with 3.4 GHz Intel Core i5. To study the sensitivity to $K$, we fit the model with $K$ as $100$, $200$, $300$, $400$ and $500$ and use the Rand index \citep{rand1971objective} for measuring the similarities of regions of differences detection with different $K$. The Rand index measures clustering similarity: if the two clusterings are almost identical, the index is close 1; otherwise, the index is close 0. In Table \ref{tab:rand}, the Rand indices of any two $K$ are close to 1, hence the selection is not sensitive to $K$. For a concise illustration, we use the result of $K=100$ in the rest of this section. 
\begin{table}[ht]
\caption{The Rand index for measuring clustering similarities. The off-diagonals of the table are the Rand indices for any two $K$.}
\label{tab:rand}
\begin{center}
\begin{tabular}{c|ccccc}
\hline\hline
K & 100 & 200 & 300 & 400 & 500 \\ \hline
100 & . &0.92 & 0.91 & 0.92 & 0.92\\
200 & . &  . & 0.94 & 0.95 & 0.94\\
300 & . & . & . & 0.96 & 0.95\\ 
400 & . & . & . & . & 0.96\\ 
500 & . & . & . & . & .\\ 
\hline\hline
\end{tabular}
\end{center}
\end{table}

We first use the \proglang{R} package \code{brainR} \citep{muschelli2014brainr} for 3-D visualizing the regions of differences. To investigate if the performance is improved by introducing spatial dependence, we also compare it to the \textit{Random Ellipsoid Model}. We give the confidence level $0.9$ for the \textit{Random Ellipsoid Model}. The selected regions of differences are displayed in Figure \ref{fig:ind_Armin}. The Rand index of the two clusterings is $0.86$ and so the results of the two analyses are similar but the \textit{Potts Model} finds more spatial contiguous regions. Our study shows that a region of difference is detected in the splenium, which is consistent with previous clinical studies on cocaine use \citep[see][]{lane2010diffusion}. The splenium is a component located on the posterior end of the corpus callosum with an essential role on cognition. Since many studies revealed that the disease status at proximally-located/neighboring voxels can be similar \citep[see][]{wu2013mapping,xue2018bayesian}, our method may be more clinically meaningful than other studies because the detected regions are spatially contiguous. Furthermore, our test is able to find more regions of differences compared to alternative clinical studies on investigating the effect of cocaine use. For example, \citet{ma2017preliminary} did not find such regions of differences by using the same data set. 

\begin{figure}[ht!]
    \centering
    \includegraphics[width=0.45\textwidth]{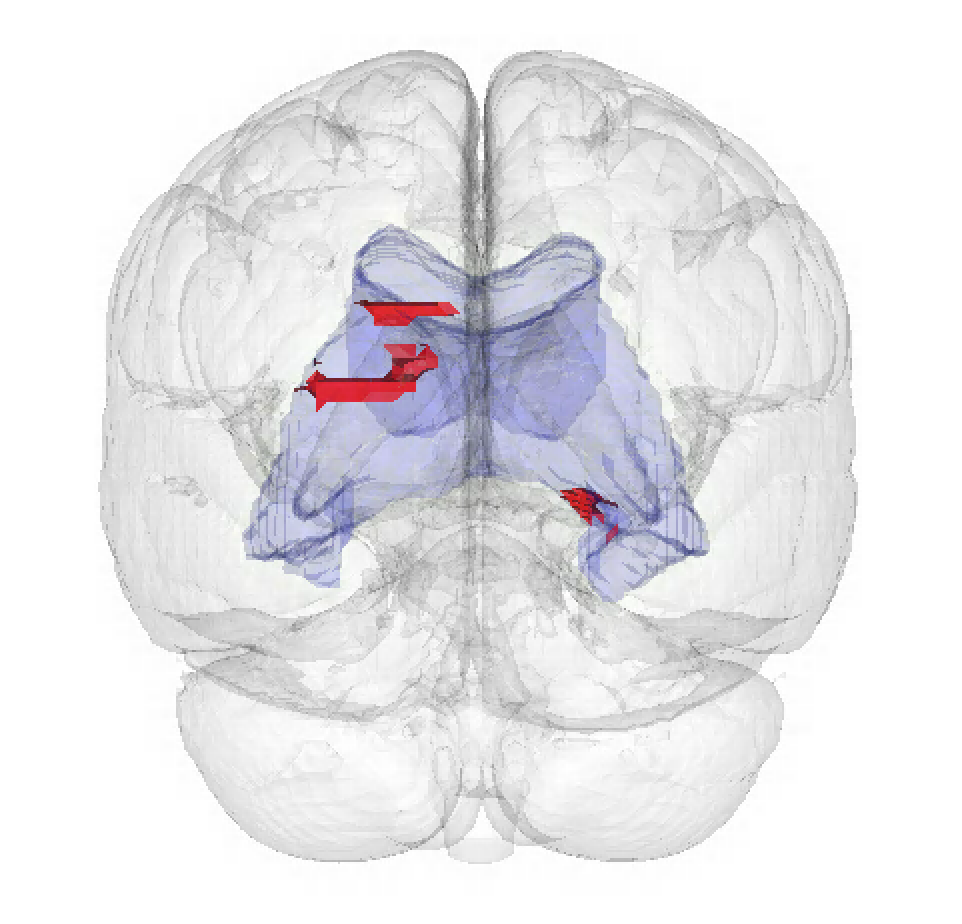}
    \includegraphics[width=0.45\textwidth]{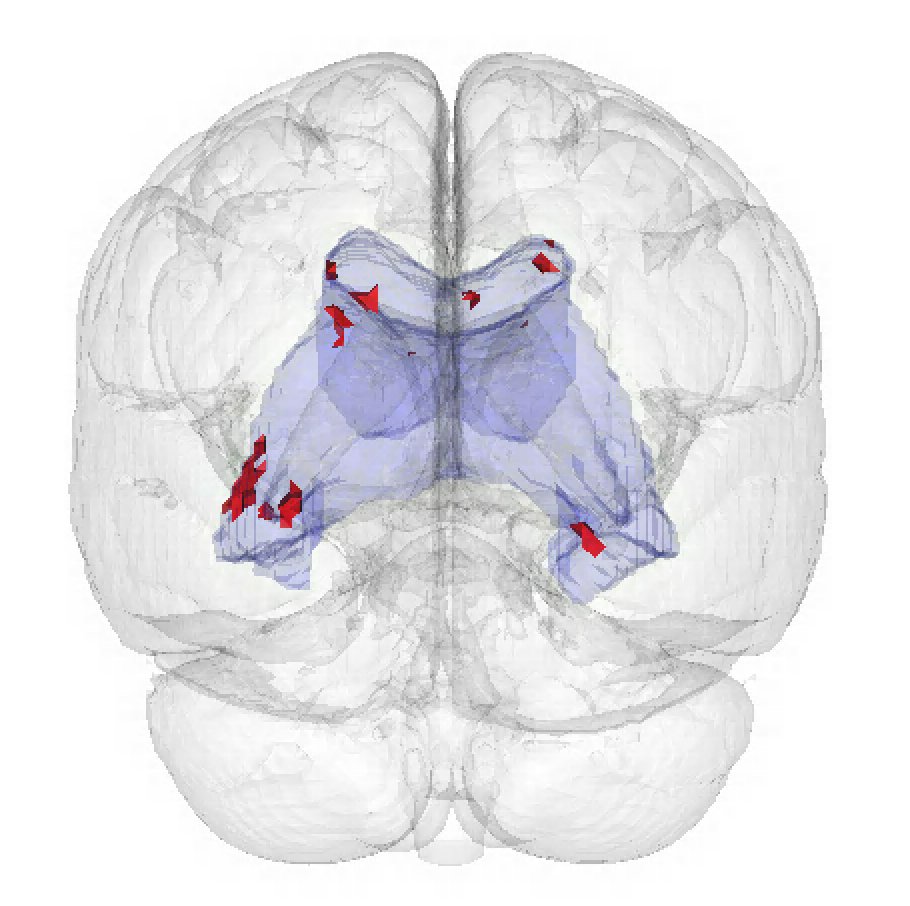}
    \caption{The regions of differences between cocaine users and non-cocaine users. This is the regions of differences selected by the \textit{Potts Model} (left panel) and the \textit{Random Ellipsoid Model} (right panel). The red area is the regions of differences.}
    \label{fig:ind_Armin}
\end{figure}

The MCMC trace plots and posterior densities of the Potts spatial dependence parameters $\bm{\theta}$ are in Figure \ref{fig:MCMC}. The concentration parameter $\xi$ has $95\%$ credible region $[0.884,0.888]$, indicating there are a few active clusters. The group-clustering parameter $\alpha$ and spatial parameter $\beta$ control the within and inter subject spatial dependence and have $95\%$ credible regions $[0.323,0.327]$ and $[18.698,18.703]$, respectively. The dependence information revealed by the two credible regions is identical to the information obtained from the empirical variograms (Figure \ref{fig:empi}). Thus similar to the usage of the classic variogram, the generalized empirical variograms may also be a tool for obtaining the plug-in values of hyperparameters as an alternative \citep[see][]{reich2018spatial}.

\begin{figure}[ht!]
    \centering
    \begin{subfigure}[ht!]{0.5\textwidth}
        \includegraphics[width=\textwidth]{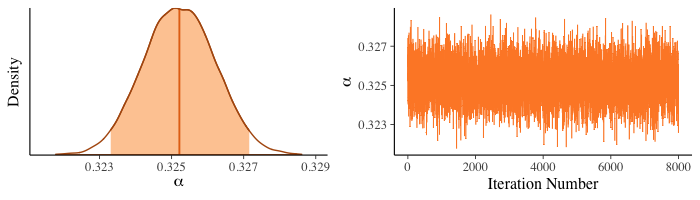}
        \caption{Group-clustering parameter, $\alpha$}
        \label{fig:emp_function_multi}
    \end{subfigure}
    ~ 
      \begin{subfigure}[ht!]{0.5\textwidth}
        \includegraphics[width=\textwidth]{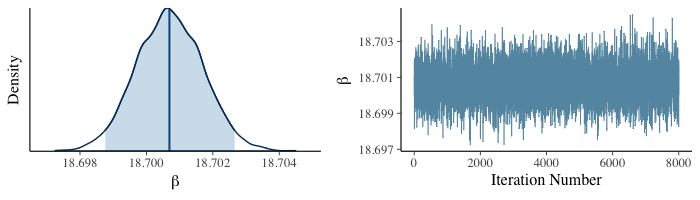}
        \caption{Spatial parameter, $\beta$}
        \label{fig:emp_within_function_multi}
    \end{subfigure}
      ~ 
      \begin{subfigure}[ht!]{0.5\textwidth}
        \includegraphics[width=\textwidth]{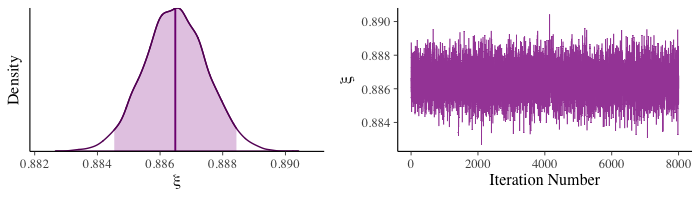}
        \caption{Concentration parameter, $\xi$}
        \label{fig:emp_cross_function_multi}
    \end{subfigure}
    \caption{The MCMC summaries of the hyperparameters $\bm{\theta}$. The left panel is the histogram of posterior samples and the colored region is the $95\%$ credible region. The right panel is the MCMC trace plot of posterior samples.\label{fig:MCMC}}
\end{figure}


\section{Discussion}
\label{sec:dis}
Although the spatial statistics literature on models and tools for matrix-variate data is sparse, the usage of positive definite matrix-variate is broad, which includes multiple-input and multiple-output (MIMO) systems \citep{smith2007distribution} and computer vision \citep{cherian2016bayesian}. Our major contribution is to present a spatial statistics formulation to model matrix-variate data via a Bayesian semiparametric mixture model. Our formulation retains the original data structure and accounts for spatial dependence by a computationally elegant model. In simulation studies, our model produces significantly improved performance compared to the non-spatial alternatives. The application to the DTI data set of cocaine users demonstrates the novelty of this model for detecting clinically meaningful regions of differences.

The current work primarily focuses on finding between-region differences in the brain at a single time-point (baseline). Temporally dependent matrix-variate data are also studied \citep{smith2007distribution}, and corresponding spatiotemporal extensions of our model are possible, though non-trivial. Extensions to incorporate covariates (i.e., socio-demographics, such as age, gender, etc.) may be possible via including a regression term in the full conditional distribution of the cluster labels.

\begin{center}
{\large\bf Acknowledgements}
\end{center}
We thank Institute for Drug and Alcohol Studies of Virginia Commonwealth University (VCU) for providing the cocaine users data set \citep{ma2017preliminary} and data manipulation instructions.

\newpage
\begin{center}
	{\large\bf SUPPLEMENTAL MATERIALS}
\end{center}

\section*{Appendix A: Density Functions}
The PDFs of Parameterized Wishart and Inverse Wishart Distribution are given as 
\begin{description}
	\item[The PDF of Wishart Distribution $\bm{X}\sim\mathcal{W}_p(\bm{V},n)$:] $$f(\bm{X}|\bm{V},n)=\displaystyle {\displaystyle {\frac {1}{2^{np/2}\left|{\mathbf {V}/n }\right|^{n/2}\Gamma _{p}\left({\frac {n}{2}}\right)}}{\left|\mathbf {X} \right|}^{(n-p-1)/2}e^{-(1/2)\operatorname {tr} ({[\mathbf {V}/n] }^{-1}\mathbf {X} )}}$$
	\item[The PDF of Inverse Wishart Distribution $\bm{X}\sim\mathcal{IW}_p({\mathbf {\Psi } },\nu)$:]
	$$\displaystyle f({\mathbf {X} }|{\mathbf {\Psi } },\nu )={\frac {\left|{(\nu-p-1)\mathbf {\Psi } }\right|^{\nu /2}}{2^{\nu p/2}\Gamma _{p}({\frac {\nu }{2}})}}\left|\mathbf {X} \right|^{-(\nu +p+1)/2}e^{-{\frac {1}{2}}\operatorname {tr} ({(\nu-p-1)\mathbf {\Psi } }\mathbf {X} ^{-1})}$$
\end{description}

\section*{Appendix B: Variograms}
In this section, we give the details of derivations of variogram. We first give the variogram in the Single-Subject Model below:

\begin{equation}
\begin{aligned}
\mathbb{E}||\bm{A}_u-\bm{A}_v||_F^2&=Tr(\mathbb{E}[(\bm{A}_u-\bm{A}_v)(\bm{A}_u-\bm{A}_v)] )\\
&=Tr(\mathbb{E}\bm{A}_u\bm{A}_u)+Tr(\mathbb{E}\bm{A}_v\bm{A}_v)-2Tr(\mathbb{E}\bm{A}_u\bm{A}_v)\\
&=[Tr(\mathbb{E}\bm{A}_u\bm{A}_u)+Tr(\mathbb{E}\bm{A}_v\bm{A}_v)-2Tr(\mathbb{E}[\bm{A}_u\bm{A}_v|g_u\not=g_v]) ]\times P(g_u\not=g_v|\beta)\\
&\text{(because $\bm{A}_u\bm{A}_u:=[\bm{A}_u\bm{A}_v|g_u=g_v]$)}\\
&=\gamma(m,\nu, \bm{\Sigma})P(g_u\not=g_v|\beta)
\end{aligned}
\end{equation}
Obviously, $\mathbb{E}||\bm{A}_{iu}-\bm{A}_{iv}||_F^2$ can be derived in the same way. Next, we give the the explicit expression of $\gamma(m,\nu, \bm{\Sigma})$. We first have
\begin{equation}
\begin{aligned}
Tr(\mathbb{E}[\bm{A}_u\bm{A}_v|g_u\not=g_v])&=\mathbb{E}[Tr(\mathbb{E}[\bm{A}_u\bm{A}_v|g_u\not=g_v,\bm{M}_{g_u},\bm{M}_{g_v}])]\\
&=\mathbb{E}[Tr(\mathbb{E}[\bm{A}_v|\bm{M}_{g_v}]\mathbb{E}[\bm{A}_u|\bm{M}_{g_u}])]\\
&=\sum_{i=1}^p\sigma_i(\bm{\Sigma})^2=\lambda
\end{aligned}
\end{equation}
where $\sigma_i(.)$ returns the $i$-th eigenvalue of the input function.

The term $Tr(\mathbb{E}\bm{A}_u\bm{A}_u)$ is complex. \citet{gupta1999matrix}[Section 3.3.6] provides trace moments of Wishart and inverse Wishart distribution. 

\begin{theorem}
	Let $\bm{S}\sim\mathcal{W}_p(\bm{\Sigma},m)$ or $\bm{W}\sim\mathcal{IW}_p(\bm{M},m)$. Also, $\bm{S}^{-1}=\bm{W}$ and $\bm{M}=\bm{\Sigma}^{-1}$. $p$ is the matrix dimension. Then we have
	
	1) $\mathbb{E}\bm{W}\bm{W}=(c_1+c_2)\bm{M}\bm{M}(m-p-1)^2+c_2Tr(\bm{M})\bm{M}(m-p-1)^2$
	
	2) $\mathbb{E}\bm{S}\bm{S}=\frac{m+1}{m}\bm{\Sigma}\bm{\Sigma}+\frac{1}{m}Tr(\bm{\Sigma})\bm{\Sigma}$,
	
	3) $\mathbb{E}Tr(\bm{S})\bm{S}=2\frac{1}{m}\bm{\Sigma}\bm{\Sigma}+Tr(\bm{\Sigma})\bm{\Sigma}$
	
	where $c_1=(m-p-2)c_2$ and $c_2=\frac{1}{(m-p)(m-p-1)(m-p-3)}$
\end{theorem}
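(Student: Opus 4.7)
The plan is to reduce all three identities to standard second-moment formulas for the ordinary Wishart and inverse Wishart, after accounting for the mean-parameterization used in the paper. Under the PDFs of Appendix A, $\bm{S}\sim\mathcal{W}_p(\bm{\Sigma},m)$ corresponds to a standard Wishart with scale $\bm{\Sigma}/m$ and $m$ degrees of freedom (so $\mathbb{E}\bm{S}=\bm{\Sigma}$), and $\bm{W}\sim\mathcal{IW}_p(\bm{M},m)$ corresponds to a standard inverse Wishart with scale $\bm{\Psi}=(m-p-1)\bm{M}$ and $m$ degrees of freedom (so $\mathbb{E}\bm{W}=\bm{M}$). Thus the task is to establish the identities in whichever parameterization is most convenient and then rescale.

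For (2) and (3) the cleanest route is the Gaussian representation $\bm{S}=\sum_{a=1}^{m}\bm{x}_a\bm{x}_a^T$ with $\bm{x}_a\distas{i.i.d}\mathcal{N}_p(\bm{0},\bm{\Sigma}/m)$. Applying Isserlis' theorem entrywise and collecting terms gives
$$\mathbb{E}[S_{ij}S_{kl}]=\Sigma_{ij}\Sigma_{kl}+\tfrac{1}{m}(\Sigma_{ik}\Sigma_{jl}+\Sigma_{il}\Sigma_{jk}).$$
Contracting $j=k$ and summing produces $\mathbb{E}[\bm{S}\bm{S}]_{il}$ and recovers identity (2); contracting $i=j$ and summing produces $\mathbb{E}[\operatorname{Tr}(\bm{S})\bm{S}]_{kl}$ and recovers identity (3). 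These steps are a few lines of index algebra once the Gaussian representation is in place.

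For (1) the situation is less transparent because $\bm{W}^{-1}$ being Wishart does \emph{not} translate into a simple moment identity for $\bm{W}^2$. I would therefore invoke the known inverse-Wishart second moment (Gupta and Nagar, as cited): for $\bm{W}$ with scale matrix $\bm{\Psi}$ and d.f. $m$,
$$\mathbb{E}[\bm{W}\bm{W}]=\frac{\bm{\Psi}\bm{\Psi}}{(m-p)(m-p-3)}+\frac{\operatorname{Tr}(\bm{\Psi})\,\bm{\Psi}}{(m-p)(m-p-1)(m-p-3)}.$$
Substituting $\bm{\Psi}=(m-p-1)\bm{M}$ and using $c_2=\{(m-p)(m-p-1)(m-p-3)\}^{-1}$ together with the algebraic identity $c_1+c_2=(m-p-2)c_2+c_2=(m-p-1)c_2=\{(m-p)(m-p-3)\}^{-1}$ yields exactly the stated form $(c_1+c_2)\bm{M}\bm{M}(m-p-1)^2+c_2\operatorname{Tr}(\bm{M})\bm{M}(m-p-1)^2$.

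The main obstacle is supplying a self-contained justification of the inverse-Wishart formula, since it cannot be obtained by a naive inversion of the Wishart identity. My fallback would be to derive it via the invariant differential-operator identities of Haff applied to the trace functionals $\operatorname{tr}(\bm{W}^2)$ and $\operatorname{tr}(\bm{W})^2$: these convert expectations involving inverse-Wishart moments into linear relations in $\bm{M}$, $\operatorname{Tr}(\bm{M})\bm{M}$, and $\bm{M}\bm{M}$, which can then be solved for the two unknown coefficients. Once this IW identity is in hand, the passage to (2) and (3) is immediate from the Gaussian-representation computation above.
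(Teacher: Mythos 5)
Your proposal is correct, and it is genuinely more self-contained than what the paper offers: the paper's entire ``proof'' of this theorem is the citation ``See Gupta and Nagar, Section 3.3.6,'' whereas you actually derive (2) and (3) from the Gaussian representation $\bm{S}=\sum_{a=1}^m \bm{x}_a\bm{x}_a^T$ with $\bm{x}_a\sim\mathcal{N}_p(\bm{0},\bm{\Sigma}/m)$ via Isserlis, and you make explicit the translation between the paper's mean parameterization (scale $\bm{\Sigma}/m$ for the Wishart, scale $(m-p-1)\bm{M}$ for the inverse Wishart) and the standard one --- a step the paper leaves entirely implicit even though the stated coefficients only make sense after that rescaling. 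I verified your index algebra: $\mathbb{E}[S_{ij}S_{kl}]=\Sigma_{ij}\Sigma_{kl}+\tfrac1m(\Sigma_{ik}\Sigma_{jl}+\Sigma_{il}\Sigma_{jk})$ contracts correctly to (2) and (3), and your standard-parameterization inverse-Wishart formula is equivalent to $\mathbb{E}[W_{ij}W_{kl}]=c_1\Psi_{ij}\Psi_{kl}+c_2(\Psi_{ik}\Psi_{jl}+\Psi_{il}\Psi_{jk})$, which after contracting $j=k$ and substituting $\bm{\Psi}=(m-p-1)\bm{M}$ gives exactly $(c_1+c_2)(m-p-1)^2\bm{M}\bm{M}+c_2(m-p-1)^2\mathrm{Tr}(\bm{M})\bm{M}$, using $c_1+c_2=(m-p-1)c_2$. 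The only place where your argument is not fully self-contained is part (1), where you still invoke the inverse-Wishart second moment as a known result and only sketch the Haff-identity fallback; but since that is precisely the citation the paper itself relies on for all three parts, this does not put you below the paper's own standard of rigor --- if you wanted to close it, carrying out the Haff computation (or citing von Rosen's moment formulas) for the two unknown coefficients would suffice.
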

\begin{proof}
	See \citet{gupta1999matrix}[Section 3.3.6]
\end{proof}

Then we first can obtain 
\begin{equation}
\begin{aligned}
\mathbb{E}[\bm{A}_u\bm{A}_u|\bm{M}_{g_u}=\bm{M}_{g_v}=\bm{M}]&=(c_1+c_2)\bm{M}\bm{M}(m-p-1)^2+c_2Tr(\bm{M})\bm{M}(m-p-1)^2=\lambda^*
\end{aligned}
\end{equation}

Next, we have 
Then we can obtain 
\begin{equation}
\begin{aligned}
Tr(\mathbb{E}\bm{A}_u\bm{A}_u)=Tr(\mathbb{E}\lambda^*)&=\{(c_1+c_2)(\nu+1)\nu^{-1}\sum_{i=1}^p\sigma_i(\bm{\Sigma})^2+(c_1+c_2)\nu^{-1}\sum_{i=1}^p\sigma_i(\bm{\Sigma})\sum_{i=1}^p\sigma_i(\bm{\Sigma})\\
&+c_2(2\nu^{-1}\sum_{i=1}^p\sigma_i(\bm{\Sigma})^2+\sum_{i=1}^p\sigma_i(\bm{\Sigma})\sum_{i=1}^p\sigma_i(\bm{\Sigma}))) \}(m-p-1)^2\\
&=\sigma
\end{aligned}
\end{equation}
In summary, $\gamma(m,\nu, \bm{\Sigma})=2(\sigma-\lambda)$.

\section*{Appendix C: The Joint Probability Mass Density (PMF) of $\{\bm{g}_1, \bm{g}_2, ..., \bm{g}_N\}\cup\{\bm{h}_0, \bm{h}_1\}$}
In this section, we show the expression of the PMF and validate that the PMF is valid:
\begin{equation}
\begin{aligned}
&P(\bm{h},\bm{g})\propto\exp\left[\sum_{i=1}^N\sum_{v=1}^n\alpha\mathcal{I}(g_{iv}=h_{x_iv})+\sum_{x=0}^1\sum_{u\sim v}\beta\mathcal{I}(h_{xu}=h_{xv})+\sum_{i=1}^N\sum_{u\sim v}\left(\beta\mathcal{I}(g_{iu}=g_{iv})-g_{iu}^\xi\right)\right]\\
&=\exp(U(\bm{g},\bm{h},\bm{\theta}))
\end{aligned}
\end{equation}
where $u\sim v$ means $u$ and $v$ are connected. It is obvious that the probability is positive and satisfies the pairwise
Markov property stated in the Hammersley and Clifford Theorem. The normalizing constant $Z(\alpha,\beta,\xi)=\sum_{\bm{g},\bm{h}}\exp(U(\bm{g},\bm{h},\bm{\theta}))$ is intractable. Since the summation is over finite and discrete indices, we have that $0<Z(\alpha,\beta,\xi)<\infty$, revealing that $P(\bm{h},\bm{g})$ is proper.

\section*{Appendix D: The Statistical Role of $h_{xv}$}
\begin{description}
	\item[Step 1: Marginalizing $g_{iv}$]
	\begin{equation}
	\begin{aligned}
	&\left[\bm{A}_{iv}|\{g_u: u\in N_v\},h_{x_iv},\{\bm{V}_{k}:k\},m\right]=\sum_{k=1}^K P(g_{iv}=k|.)\mathcal{IW}_p(\bm{V}_{k},m)\\
	&=\sum_{k=1}^K \underbrace{C_1}_{\substack{Normalizing\\ Constant}}\exp\left[-k^{\xi}+ \beta \sum_{u\in N_v}\mathcal{I}(g_{iu}=k) +\alpha \mathcal{I}(h_{x_iv}=k) \right]\mathcal{IW}_p(\bm{V}_{k},m)\\
	\end{aligned}
	\end{equation}
	\item[Step 2: Marginalizing $\{g_u: u\in N_v\}$]
	\begin{equation}
	\begin{aligned}
	&Q=\sum_{\substack{g_{iu}=1:K,\\u\in N_v}} \exp\left[\beta \sum_{u\in N_v} \mathcal{I}(g_{iu}=k)\right]\underbrace{\mathcal{P}_{g_{iu},u\in N_v}}_{\substack{joint\ p.m.f\\of\ g_{iu},u\in N_v}}\\
	&\left[\bm{A}_{iv}|h_{x_iv},\{\bm{V}_{k}:k\},m\right]=\sum_{k=1}^K \underbrace{C_1 Q \exp\left[-k^{\xi}+\alpha \mathcal{I}(h_{x_iv}=k) \right]}_{\Phi_k(h_{x_iv})}\mathcal{IW}_p(\bm{V}_{k},m)\\
	\end{aligned}
	\end{equation}
\end{description}

\section*{Appendix E: Codes}
The codes and example scripts are available at \url{https://github.com/ZhouLanNCSU/Potts_DTI}.

\bibliographystyle{style}


\end{document}